\newtheorem{Def}{Definition}
\newtheorem{Thm}{Theorem}
\newtheorem{Lem}[Thm]{Lemma}
\newtheorem{Cor}[Thm]{Corollary}
\newcommand{\poly}{\mathop{\mbox{\rm poly}}}
\newcommand{\E}{\mathop{\mathbb{E}}}
\newcommand{\ignore}[1]{}
\def\BR{{\bf BR}}
\def\BRi{{\bf BR}_i}
\def\BRid{{\bf BR}_{i,\delta}}
\def\BRd{{\bf BR}_{\delta}}
\def\reals{{\mathbb R}}
\def\D{{\mathcal D}}
\def\epsilon{\varepsilon}
\def\reals{\mathbb{R}}
\def\A{\mathcal{A}}
\def\K{\mathcal{K}}
\def\C{\mathcal{C}}
\def\I{\mathbb{I}}
\def\bold0{\mathbf{0}}
\newcommand\mycases[4] {{
\left\{
\begin{array}{ll}
    {#1} & {#2} \\\\
    {#3} & {#4}
\end{array}
\right. }}
\def\vol{\mbox{vol}}
\newcommand{\eps}{\varepsilon}
\newcommand{\argmin}{{{\rm arg}\min}}
\newcommand{\argmax}{{{\rm arg}\max}}
\begin{document}
\title{(weak) Calibration is Computationally Hard}

\author{
Elad Hazan\\
Technion - Israel Institute of Technology\\
\texttt{ehazan@ie.technion.ac.il}
\and
Sham M. Kakade \\
Microsoft Research, New England \& Wharton, University of Pennsylvania \\
\texttt{skakade@microsoft.com}
}

\date{}
\maketitle

\begin{abstract}
We show that the existence of a computationally
  efficient calibration algorithm, with  a low weak calibration
  rate, would imply the existence of an efficient algorithm for computing
  approximate Nash equilibria --- thus implying the unlikely
  conclusion that every problem in $PPAD$ is solvable in polynomial
  time. 
\end{abstract}

\section{Introduction}

Consider a weather forecaster that predicts the probability of rain. The forecaster is said to be {\it calibrated} if every time she predicts a certain probability of rain, the empirical average of rainy vs. non-rainy days approaches this forecasted probability.  

This very natural property of forecasting was introduced by
\cite{dawid82} and has found numerous applications since
\cite{ foster97calibratedlearning,foster_asymptotic_1998,KLS99, foster_proof_1999,
  fudenberg_easier_1999, mannor2007online, perchet2009calibration,
  mannor2009geometric, rakhlin2010online}. See
\cite{cesa-bianchi_prediction_2006} for a more detailed bibliographic
survey.

\cite{foster_asymptotic_1998} provided the first randomized
calibration algorithms. Subsequently, numerous other algorithms have
been developed based on various
different techniques have followed: Blackwell approachability
\cite{foster_proof_1999}, internal-regret minimization
\cite{foster_asymptotic_1998} and online convex optimization
\cite{AbernethyBH11}, to name a few. 

While existence results for calibration are well established, our understanding of
the statistical and
computational complexity is more murky. The statistical complexity can
be thought of as
the number of rounds it takes achieve some natural notion of a low
calibration; the computational complexity can be thought of as the net
computation time to achieve this. This work provides a lower bound for
the latter.
When characterizing the efficiency of algorithms, 
the critical issue is the relationship between the relevant parameters
and the desired notion of calibration. The notion of the (total)
calibration rate (at precision $\eps$) is essentially that
defined by \cite{foster_asymptotic_1998}. The relevant parameters are
the number of forecasting iterations (henceforth denoted $T$), the
precision of calibration $\eps$, and number of possible outcomes in
the forecasting game, $d$. A variant of this question was posed as an
open problem in \cite{AbernethyM11}. \footnote{\cite{AbernethyM11}
  did not explicitly pose this question in terms of net computation
  time.}

In this work, we give a negative result showing that calibration (in
the worst case) is hard, under a widely-believed computational
complexity assumption. In particular, we utilize a natural (smooth)
notion of calibration at scale $\eps$, namely \emph{weak
  calibration} (as in \cite{KakadeF08}). Precisely, the complexity implication of
our main result,  Theorem~\ref{thm:main}, is as follows:

\begin{Cor} \label{cor:mainintro}
Suppose there exists a constant $c > 0$ and a weak calibration
algorithm which, for every precision $\eps>0$,
 attains a calibration rate of
$\eps^{c}$ in a total computational
running time (in the RAM model) that is polynomial in both $d$ and $\frac{1}{\eps}$, then $PPAD \subseteq
RP$. 
\end{Cor}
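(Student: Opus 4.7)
The plan is to compose the reduction asserted by Theorem~\ref{thm:main} with the $PPAD$-completeness of computing inverse-polynomial approximate Nash equilibria in two-player bimatrix games, due to Chen, Deng and Teng (building on Daskalakis, Goldberg, and Papadimitriou). First I would extract from Theorem~\ref{thm:main} the quantitative form of the reduction: a weak calibration algorithm running on an instance of outcome dimension $d$ with calibration rate $\eps^c$ and total running time polynomial in $d$ and $1/\eps$ can be turned into an algorithm that, given a two-player game of size $n$, outputs a $\delta(\eps,n)$-approximate Nash equilibrium in time $T(\eps,n)$, where $d$ itself is a fixed polynomial in $n$ dictated by how the reduction encodes the game as a forecasting environment.

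Next I would instantiate the calibration algorithm at a precision $\eps$ that is a sufficiently small inverse polynomial in $n$. Because the hypothesised rate is $\eps^c$ and the total running time is polynomial in both $d$ and $1/\eps$, such a choice keeps the overall running time polynomial in $n$ while forcing the induced Nash approximation error $\delta$ below $n^{-\alpha}$ for some absolute constant $\alpha>0$. Then I would invoke Chen--Deng--Teng: finding an $n^{-\alpha}$-approximate Nash equilibrium in a two-player game for any constant $\alpha>0$ is $PPAD$-complete. Hence the composition yields a polynomial-time procedure for an arbitrary $PPAD$ problem.

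Finally, I would handle the randomness. Any calibration algorithm is necessarily randomised (deterministic calibration is ruled out classically by Oakes), so the Nash solver obtained from the reduction is randomised and succeeds only with high probability. The resulting containment is therefore $PPAD \subseteq RP$ rather than $PPAD \subseteq P$, which is exactly what the corollary asserts. Standard success-probability amplification by independent repetition fits within the same polynomial budget and does not affect the conclusion.

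The main obstacle, and the reason this corollary is not entirely immediate from Theorem~\ref{thm:main}, is the quantitative bookkeeping in the second step: one has to verify that the reduction in Theorem~\ref{thm:main} preserves a \emph{polynomial} relationship between $n$, $d$, $\eps$, and $\delta$, so that picking $\eps$ polynomially small in $n$ simultaneously (i) keeps the running time polynomial in $n$ and (ii) drives $\delta$ below the threshold at which Nash equilibrium approximation becomes $PPAD$-hard. This amounts to chasing the exponent $c$ through the polynomial degrees arising in Theorem~\ref{thm:main}, and is really the quantitative heart of the statement.
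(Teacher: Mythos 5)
Your proof follows essentially the same route as the paper's: invoke Theorem~\ref{thm:main} to convert the calibration algorithm, run at a suitably small precision, into a randomized polynomial-time algorithm producing inverse-polynomially approximate Nash equilibria, and then appeal to the Chen--Deng--Teng hardness result (Theorem~\ref{thm:ppad}) to conclude $PPAD \subseteq RP$. Your observation that the quantitative bookkeeping is the real content is also right; the paper handles it by noting that $T$ is polynomial in $d$ and $1/\eps$ (since the \emph{total} running time is assumed polynomial), that the outcome dimension passed to the calibration subroutine is $d^2$, and that the resulting Nash approximation error is $O(\eps^c + d\eps^{1/3})$, all of which keep the composed algorithm within a polynomial budget after choosing $\eps$ to be a suitable inverse polynomial.

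One misconception is worth correcting. You attribute the randomness, and hence the $RP$ rather than $P$ conclusion, to the claim that ``any calibration algorithm is necessarily randomised'' by Oakes's impossibility result. That is true of \emph{strong} calibration, but not of \emph{weak} calibration, which is precisely the hypothesis of the corollary: the definition in Section~\ref{sec:calibration} explicitly takes the weakly calibrated forecaster $\A$ to be deterministic, and deterministic weak calibration is a known positive result (Kakade and Foster). The randomness that forces the conclusion to be $PPAD \subseteq RP$ instead comes from Algorithm~\ref{alg} itself: the outcomes $X_t$ are sampled stochastically from the smooth best-response product distributions, and the final output is obtained by sampling a uniformly random round $t$ and then a random vertex $p\in V(p_t)$ with weight $\omega_p(p_t)$, so the Theorem~\ref{thm:main} guarantee is via Markov's inequality and holds only with probability greater than $1/2$. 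Your conclusion is correct, but the source of randomness you cite is not the one actually in play.
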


Here, the weak calibration rate is a cumulative notion of
error, precisely defined in in Section~\ref{sec:calibration}; $RP$
stands for the complexity class of randomized polynomial time; $PPAD$
is the class of problems that are polynomial time reducible to the
problem of computing Nash equilibrium in a two player game (See
\cite{Papadimitriou94,Daskalakis09}). It is widely
believed that $PPAD$ is not contained in $RP$. Note that we are considering the \emph{total} computation time over
all $T$ rounds (so there is no explicit $T$ dependence).

\section{Calibration} \label{sec:calibration}

\noindent 
Calibration inherently concerns distributions, and when comparing distributions it makes sense to talk about statistical distance or its closely related cousin the $\ell_1$ norm, rather than the Euclidean norm. Therefore  throughout we use $\|\cdot\|$ to denote the $\ell_1$ norm and
$\|\cdot\|_p$ to denote the $\ell_p$ norm.

We let $ \{0,1,2,...,d\}$ be an outcome space, and $X_1, X_2, \ldots X_T$ be a sequence of outcomes, denoted as $X_t \in \{0,1\}^d$, such that $X_t(i)$ is one if and only if the outcome in iteration $t$ is $i \in [d]$. Hence $\frac{1}{T} \sum_t X_t$ is the empirical frequency of outcomes. 

A randomized forecaster $\A$ produces a sequence of probability
distributions $\D_1,...,\D_T$ over the set $\Delta_d = \{p \in
\reals^d , p_i \geq 0 , \sum_i p_i = 1\}$. Every iteration a point in
the interior of the simplex is chosen: $p_t \sim \D_t$, which
constitutes the forecast of $\A$. 

\vspace*{0.1in}
\noindent
{\bf Strong Calibration:\/} For a set of points $V \subset  \Delta_d$, define the following
``test'' functions (where the $\argmin$ breaks ties arbitrarily):
\begin{align*}
& \I_{p}(q) = \mycases {1} {p=\argmin_{p'\in V} \|p'-q\|}{0}{\mbox{otherwise}}
\end{align*}
We say this set of test function is at \emph{precision} $\eps$ if $V$ is such that every $q\in\Delta_d$ is at
least $\eps$-close (in $\ell_1$) to some point in $V$, i.e. for all
$q\in\Delta_d$, we have $ \min_{p\in V} \|p-q\| \leq \eps$ (i.e. the set $V$ is an $\eps$-cover for $\Delta_d$). 

\ignore{
Define for every $T$ and every $q,p \in \Delta_d$ and $\eps > 0$, the following functions /  random variables 
\begin{align*}
& \I_{q,\eps}(p) = \mycases {1} {\|p - q\|\leq \eps}{0}{o/w}
\end{align*}
Let $\Delta_d^\eps = \{ p \in \Delta_d \ , \ \forall i \ . \ \frac{p_i}{\eps} \in \mathcal{Z} \}$ be the discretized simplex.  
}

\begin{Def}
Let the strong-calibration rate of a (possibly randomized) forecaster
$\A$, with respect to indicator test functions
$\mathcal{F}^\eps = \{\I_{q}(\cdot)\}$ at precision $\eps$, be
\[
		C_T(X_{1:T},\A,\mathcal{F}^\eps)  =  \E_{\D_1,...,\D_T}
                \left[ \frac{1}{T} \sum_{p \in V }   \left\|  \sum_{t=1}^T \I_p(p_t) (p_t - X_t)  \right\| \right] 
\]
\end{Def}

This definition is closely related to that used in
~\cite{bla85sep,foster_asymptotic_1998}; the latter definition is
motivated by a bias-variance decomposition of the Brier score.
The distinctions being that~\cite{foster_asymptotic_1998} use the
squared $\ell_2$ error (while we use the $\ell_1$ primarily for
convenience) and ~\cite{foster_asymptotic_1998} restrict $\A$ to make
predictions which lie in $V$ (a minor distinction).

Much of the literature is concerned with the asymptotic behavior,
without explicitly characterizing the finite time rate. It is
standard to say that a forecaster $\A$ is (strongly)
\emph{asymptotically calibrated} if for all
$X_{1:T}$, we can drive $C_T(\A,\mathcal{F}^\eps) $ to $0$, as $T \to
\infty$. If $\A$ is restricted to make predictions in the set $V$,
then this notion seeks to drive $C_T(\A,\mathcal{F}^\eps) \leq \eps$
in the limit.
In this work, the rate of this function is critical.


The definition of asymptotic calibration considers the ``total error''
over an $\eps$-grid, and it adjusts the normalization for each term to
$\frac 1 T$.   Note that our indicator functions satisfy for all $q \in \Delta_d$:
\begin{equation} \label{strong_cover}
\sum_{p \in V} \I_p(q) =1
\end{equation}
Since every $q$ is covered by only one indicator function. This
implies that:
\[
 \frac{1}{T} \sum_{p \in V } \sum_{t=1}^T \I_p(p_t)  = 1
\]
which implies that $C_T(X_{1:T},\A,\mathcal{F}^\eps)$ is bounded by
$2$.

\vspace*{0.1in}
\noindent
{\bf Weak Calibration:\/}
We now
turn to the notion of weak calibration, which covers
$\Delta_d$ in a more continuous manner.
The weak calibration rate is more naturally defined by a triangulation
of the simplex, $\Delta_d$. By this, we mean that
$\Delta_d$ is partitioned into a set of simplices such that any two
simplices intersect in either a common face, common vertex, or not at
all. Let $V$ be the vertex set of this triangulation. Note that any
point $q$ lies in some simplex in this triangulation, and, slightly
abusing notation, let $V (q)$ be the set of corners for this
simplex. Note that the function $V(\cdot)$ specifies the triangulation.

Instead of indicator functions $ \I_p(\cdot)$, we associate a
test function $\omega_p(\cdot)$ with each $p\in V$ as
follows.  Each $q \in \Delta_d$ can be uniquely written as a weighted
average of its neighboring vertices, $V (q)$. For $p\in V(q)$, let us
define the test functions $\omega_p(q)$ to be these linear weights, so they are
uniquely defined by the linear equation: 
\[
 q = \sum_{p \in V(q)} \omega_p(q) p
\]
For $p \notin V(q)$, we let $\omega_p(q) = 0$. We refer to this set of functions as the \emph{triangulated test
  functions} with regards to $V(\cdot)$ and say that this is at \emph{precision}
$\eps$ if the diameter of the set of points $V(q)$ is
  less than $\eps$ for all $q$.

 A useful property
is that for all $q\in\Delta_d$, 
\begin{equation} \label{weak_cover}
\sum_{p \in V} \omega_p(q)  = 1
\end{equation}
since $q$ lies in the convex hull of $V(q)$.  In comparison to
Equation~\eqref{strong_cover}, these test functions cover $\Delta_d$ in
a more smooth manner: they again sum to $1$, and each $\omega_p(q)$ is a
continuous function (as opposed to the discontinuous indicator
functions).

We now define deterministic calibration algorithms, so called ``weak
calibration" with regards to these Lipchitz test functions.

\begin{Def}
Let $\mathcal{W^\eps}=\{\omega_{p}\}$ be a set of \emph{triangulated
    test functions} at precision $\eps$. The weak-calibration rate for
  a (deterministic) forecaster $\A$
  with respect to to $\mathcal{W^\eps}$
  \[
    C_T(X_{1:T},\A,\mathcal{W^\eps}) =   \frac{1}{T}\sum_{p \in V }   \left\|  \sum_{t=1}^T \omega_p(p_t)  (p_t - X_t)  \right\| 
  \]
\end{Def}

\cite{KakadeF08} showed that there exist deterministic calibration
algorithms (also see \cite{mannor2007online}).

Again, note the normalization property:
\[
 \frac{1}{T} \sum_{p \in V } \sum_{t=1}^T \omega_p(p_t)  = 1
\]
which implies that $C_T(X_{1:T},\A,\mathcal{W^\eps})$ is bounded by
$2$.

\section{Main Result}

Our main result is based on using a calibration algorithm to compute a
Nash equilibrium of a two player game. Before we state our main result,
let us review the definition of an approximate Nash equilibrium, along
with the attendant computational complexity
results.

\subsection{Nash equilibria in games}

A (square) {\em two-player bi-matrix game} is defined by two payoff matrices
$U_1,U_2 \in \reals^{n\times n}$, such that if the row and column players
choose pure strategies $i,j \in [n]$, respectively, the payoff to the row
and column players are $U_1(i,j)$ and $U_2(i,j)$, respectively.

A {\em mixed strategy} for a player is a distribution over pure strategies
(i.e. rows/columns), and for brevity we may refer to it simply as a strategy.
An \emph{$\varepsilon$-approximate Nash equilibrium} is a pair of mixed strategies
$(p,q)$ such that 
\begin{align*}
\forall i\in[n],\quad & p^\top  U_1 q \geq e_i^\top  U_1 q - \eps, \\
\forall j\in[n],\quad & p^\top  U_2 q \geq p^\top  U_2 e_j - \eps.
\end{align*}
Here and throughout, $e_i$ is the $i$-th standard basis vector,
i.e. $1$ in $i$-th coordinate, and $0$ in all other coordinates.
If $\varepsilon = 0$, the strategy pair is called a {\em Nash equilibrium} (NE).

For notational convenience, we slightly abuse notation by denoting the
payoffs of mixed strategies as:
\[
U_1(p,q) = p^\top U_1 q \ , \ U_2(p,q) = p^\top U_2 q 
\]

The definition immediately implies that the pair $(x,y)$  is an $\eps$-equilibrium if and only if for all mixed strategies
$\tilde{x}, \tilde y$,
\begin{align*}
U_1(x, y) \geq U_1(\tilde{x},y) - \eps, \\
U_2(x, y) \geq U_2(x, \tilde y) - \eps.
\end{align*}

As we are concerned with an additive notion of approximation,
we assume that the entries of the matrices are in the range $[0,1]$.
In particular this implies that the functions $U_1,U_2$ are $1$-Lipschitz w.r.t the $\ell_1$ norm, since for all $p_1,p_2,q \in \Delta_d$:
\begin{equation} \label{eqn:UisLip}
U_i(p_1,q) - U_i(p_2,q) = (p_1-p_2)^\top U_i q \leq \|p_1 - p_2 \| \|U_i q\|_\infty \leq \|p_1 - p_2\| 
\end{equation}
Where we used H\"{o}lder's inequality and the fact that $U_i(i,j) \in [0,1]$.

The following theorem was provided by \cite{CDT09}:
\begin{Thm} \label{thm:ppad}
\cite{CDT09} If there exists a randomized algorithm that computes a $\eps$-NE in a two player
game in time $\poly(d,\frac{1}{\eps})$ then $PPAD \subseteq RP$.
\end{Thm}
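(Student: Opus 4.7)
The theorem is due to Chen, Deng, and Teng, and this plan sketches the conceptual steps of their reduction rather than reproducing the arithmetic. The overall plan is to construct an approximation-preserving reduction, with only polynomial loss, from a canonical PPAD-complete problem to the bimatrix $\eps$-NE problem. Because the hypothesized algorithm runs in time $\poly(n,1/\eps)$, each intermediate step is allowed polynomial but not exponential blowup in the approximation parameter.

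I would start from BROUWER (equivalently END-OF-LINE), which is PPAD-complete even at precision $1/\poly$: given an arithmetic circuit computing $F:[0,1]^k \to [0,1]^k$ and $\eta = 1/\poly(|F|)$, finding $x$ with $\|x-F(x)\|\leq \eta$ is PPAD-hard. The first reduction, following Daskalakis--Goldberg--Papadimitriou, translates this into a graphical game of constant degree in which each node runs a small local game that simulates one gate of $F$; approximate equilibria decode back into approximate fixed points of $F$ with only polynomial loss in precision. Flattening yields a polymatrix (pairwise separable) game on $n$ players, each with $O(1)$ pure strategies, whose equilibria correspond, up to polynomial slack, to equilibria of the graphical game.

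The hardest step is compressing the polymatrix game into a two-player game. The natural construction merges the $n$ players into two ``meta-players'' whose pure-strategy sets are indexed by (sub-player, local-strategy) pairs; block-structured payoff matrices then recover the original polymatrix payoffs provided each meta-player spreads its mass roughly uniformly across sub-players. Enforcing this near-uniformity at an approximate equilibrium is the main obstacle: naive merges preserve the embedding only when $\eps$ is exponentially small, which is useless here. The CDT contribution is a gadget, essentially a carefully designed ``generalized matching pennies'' layer built from sign matrices and averaging blocks, that penalizes imbalance in each meta-player's marginal and therefore maintains the equilibrium correspondence up to only polynomial loss in $\eps$. Chaining the three reductions, a randomized $\poly(n,1/\eps)$ algorithm for bimatrix $\eps$-NE solves BROUWER at precision $1/\poly$ in randomized polynomial time, giving $PPAD \subseteq RP$.
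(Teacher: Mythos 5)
The paper does not prove Theorem~\ref{thm:ppad}; it is stated purely as a citation of \cite{CDT09} and used as a black box in the main reduction, so there is no in-paper argument to compare against. What you have written is a conceptual roadmap of the external proof rather than an alternative to anything in this paper.

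As a sketch of the Chen--Deng--Teng argument, your account captures the essential structure: start from a discrete Brouwer/END-OF-LINE problem that remains PPAD-complete at inverse-polynomial precision, simulate the circuit by a constant-degree graphical or polymatrix game, then compress to a bimatrix game via two ``meta-players'' stabilized by a matching-pennies-type gadget that forces near-uniform marginals, with each step losing only polynomial precision. A few historical attributions are slightly compressed: the graphical-game simulation of arithmetic gates is due to Goldberg--Papadimitriou and Daskalakis--Goldberg--Papadimitriou, the two-player collapse via the generalized matching-pennies / lawyer-game gadget traces to Goldberg--Papadimitriou and Chen--Deng, and the distinctive contribution of \cite{CDT09} is the careful precision bookkeeping (their ``well-supported'' equilibrium notion and polynomial-in-$n$ hardness threshold) showing that approximation at scale $n^{-\Theta(1)}$ is already PPAD-hard, which is exactly what the $\poly(d,1/\eps)$-to-$RP$ implication needs. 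You also leave unstated, but clearly assume, that a randomized algorithm with two-sided error can be converted to the one-sided-error $RP$ form because an alleged approximate equilibrium can be verified in polynomial time; this is standard but worth saying. None of these affects the correctness of the conclusion, and the proposal is a fair summary of the cited result rather than a gap.
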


\subsection{Nash equilibria computation with a calibration algorithm}

\begin{algorithm} [t]
\caption{Approximate NE computation via calibration algorithm $\A$ } \label{alg}
\begin{algorithmic}
\STATE {\bf Input:}  
 calibration algorithm $\A$ along with $\mathcal{W^\eps}$ on the outcome space
$\{0,1\}^d \times \{0,1\}^d$; two player game  $U_1,U_2$ over
$\Delta_d \times \Delta_d$. 
\STATE {\bf Initialize} Set $\delta = \eps^{1/3} $ and $p_1$ to be $ \A(\emptyset)$ 
\FOR {$t = 1,2,...,T$}
\STATE  Let $[p_t]_1$ and $[p_t]_2$ denote the marginal
        distributions of $p_t$ with respect to the first and second
        coordinates (respectively).
\STATE Sample the outcome
        $X_t \in \{0,1\}^d \times \{0,1\}^d$ according to the
        product distribution:
        \[X_t \sim
        \BR_{1,\delta}([p_t]_2)\times \BR_{2,\delta}([p_t]_1)
        \] where
        $\BRid$ is a smooth best-response function, defined in Section~\ref{sect:BR}.
\STATE  Update $p_{t+1} \leftarrow \A(X_1,...,X_t)$

\ENDFOR
\STATE  Sample $t$ uniformly from $\{1,\ldots T\}$
\STATE Sample $p\in V(p_t)$ under the law $\Pr(p | p_t) = \omega_p(p_t)$. 

\RETURN  $\BRd(p) =(\BR_{1,\delta}([p]_2),\BR_{2,\delta}([p]_1))$
\end{algorithmic}
\end{algorithm}

We now present the reduction from weak calibration to computing
equilibria in games, thereby obtaining the hardness result stated in Corollary~\ref{cor:mainintro}. 
Algorithm \ref{alg} utilizes a calibration algorithm in a specially tailored game
theoretic protocol. Observe this protocol is run with an outcome space
of size $d^2$. This protocol is based on the ideas in
\cite{KakadeF08}, which utilized a weak calibration algorithm to obtain
asymptotic convergence to the convex hull of Nash equilibria (also see \cite{mannor2007online}). Here, our algorithm
outputs a particular approximate Nash equilibrium in finite time, which
allows us to provide a computational complexity lower bound.

\begin{Thm} \label{thm:main}
Suppose a  weak calibration algorithm $\A$ satisfies the following uniform
bound on the calibration rate: $
C_T(X_{1:T},\A,\mathcal{W^\eps}) \leq F(d, \mathcal{W^\eps}, T)$ (where $F$ does not
depend on $X_{1:T}$).  
Let $d>2$ and $\eps < \frac{1}{d^3}$.
Then with probability greater than
$1/2$, Algorithm \ref{alg} (using $\delta=\eps^{1/3}$) returns a $(4F(d^2, \mathcal{W^\eps}, T) +22 d
\eps^{1/3})$-Nash equilibrium.
\end{Thm}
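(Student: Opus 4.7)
The plan is to translate the weak calibration guarantee, applied to the joint outcome space of size $d^2$, into the statement that a randomly drawn vertex $p^\star$ of the triangulation is an approximate fixed point of the smooth best-response map $\BR_\delta$, and then to show that such an approximate fixed point yields an approximate Nash equilibrium via the Lipschitz bound \eqref{eqn:UisLip} and the standard smoothing-error estimate for $\BR_\delta$. The key device is that, conditional on the history, $\E[X_t \mid p_t] = \BR_\delta(p_t)$, so calibration pins down the average $p_t$ against the average $\BR_\delta(p_t)$.

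First I would set up the bookkeeping. For each $p \in V$ define $n_p = \sum_t \omega_p(p_t)$ (note $\sum_p n_p = T$ by the normalization identity), and the three weighted averages $\bar p = \tfrac{1}{n_p}\sum_t \omega_p(p_t)\, p_t$, $\bar X_p = \tfrac{1}{n_p}\sum_t \omega_p(p_t)\, X_t$, and $\bar R_p = \tfrac{1}{n_p}\sum_t \omega_p(p_t)\, \BR_\delta(p_t)$. The weak calibration hypothesis directly gives
\[ \frac{1}{T}\sum_{p\in V} n_p\, \|\bar p - \bar X_p\| \;\leq\; F(d^2, \mathcal{W}^\eps, T). \]
Since the triangulation has precision $\eps$, whenever $\omega_p(p_t) > 0$ we have $\|p_t - p\| \leq \eps$, so $\|\bar p - p\| \leq \eps$; by the Lipschitz property of $\BR_\delta$ (Section~\ref{sect:BR}), also $\|\bar R_p - \BR_\delta(p)\| = O(\eps)$. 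Second, because $\E[X_t \mid \mathcal{F}_{t-1}] = \BR_\delta(p_t)$, the increments $\omega_p(p_t)(X_t - \BR_\delta(p_t))$ are bounded martingale differences, and Azuma–Hoeffding coordinate-wise, combined with a union bound over the $\poly(d,1/\eps)$-many vertices, shows $\tfrac{1}{T}\sum_p n_p \|\bar X_p - \bar R_p\| = O(\eps)$ with probability at least $3/4$.

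Third, the triangle inequality
\[ \|p - \BR_\delta(p)\| \leq \|p-\bar p\| + \|\bar p - \bar X_p\| + \|\bar X_p - \bar R_p\| + \|\bar R_p - \BR_\delta(p)\|, \]
together with averaging against the probability weights $n_p/T$ on vertices, yields
\[ \sum_{p\in V} \tfrac{n_p}{T}\,\|p - \BR_\delta(p)\| \;\leq\; F(d^2,\mathcal{W}^\eps,T) + O(\eps). \]
Since the algorithm samples $(t,p)$ with joint probability $\tfrac{1}{T}\omega_p(p_t)$, the sampled vertex $p^\star$ has marginal law $n_p/T$, so Markov's inequality (intersected with the concentration event) gives $\|p^\star - \BR_\delta(p^\star)\| \leq 2F(d^2,\mathcal{W}^\eps,T) + O(\eps)$ with probability greater than $1/2$. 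Finally, because $\BR_{i,\delta}$ is a best response up to additive smoothing error $O(d\delta)$, combining this approximate fixed-point guarantee with the Lipschitz bound \eqref{eqn:UisLip} on $U_i$ (applied to both players, which costs another factor of two) converts the approximate fixed point into a Nash gap of $4F(d^2,\mathcal{W}^\eps,T) + O(d\delta)$. The choice $\delta = \eps^{1/3}$ and the hypothesis $\eps < 1/d^3$ (which makes $\eps$ lower-order than $d\delta$) then yield the stated $22 d\eps^{1/3}$ constant.

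The main obstacle I expect is the tension in the smoothing parameter $\delta$: it must be small enough that $\BR_\delta$ approximates a true best response (so that the fixed-point property produces an actual Nash equilibrium), but large enough that $\BR_\delta$ is sufficiently Lipschitz for the per-vertex bound $\|\bar R_p - \BR_\delta(p)\| = O(\eps)$ to hold with controllable constants. These requirements pull in opposite directions, and the scaling $\delta = \eps^{1/3}$ is precisely what balances the triangulation error (entering linearly as $\eps$) against the smoothing error (entering as $d\delta$). A secondary subtlety is ensuring the martingale concentration step produces an error strictly lower order than $F$, so that the $1/2$ probability budget (split between Markov and concentration) closes correctly and the constants $4$ and $22$ in the final bound emerge cleanly.
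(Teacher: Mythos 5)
Your overall architecture matches the paper's: use the martingale property $\E[X_t\mid\text{history}]=\BRd(p_t)$ to turn the calibration guarantee into an approximate-fixed-point statement for the sampled vertex $p^\star$, then apply Markov and the fixed-point-to-Nash lemma. However, there is a genuine gap in the middle, and it is precisely where you and the paper part ways.

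The paper's Lemma~\ref{lem:bound} works \emph{entirely in expectation}: it applies Jensen's inequality to pull the expectation inside the norms, uses the tower property so that $\E[\omega_p(p_t)(p_t-X_t)] = \E[\omega_p(p_t)(p_t-\BRd(p_t))]$ holds as an exact identity (no error term), then splits $p_t-\BRd(p_t)$ into $(p-\BRd(p))$ plus a remainder controlled by triangulation diameter and the $\tfrac{2}{\delta^2}$-Lipschitz bound. Only at the very end is Markov's inequality invoked once. You instead try to control the deviation $\tfrac{1}{T}\sum_p n_p\|\bar X_p-\bar R_p\|$ by a high-probability argument (Azuma--Hoeffding coordinate-wise plus a union bound over $V$). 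This step does not go through as written for two reasons. First, you assert there are ``$\poly(d,1/\eps)$-many vertices,'' but a triangulation of $\Delta_{d^2}$ at precision $\eps$ has $|V|\gtrsim(1/\eps)^{d^2-1}$ vertices, which is exponential in $d$, so a union bound does not stay polynomial. Second, even setting the union bound aside, each per-vertex Azuma bound is of order $\sqrt{n_p\log(\cdot)}$, so after weighting by $n_p/T$, summing over $p$, and applying Cauchy--Schwarz you get a quantity of order $\sqrt{|V|/T}$; forcing this below $\eps$ again requires $T$ exponential in $d$, contradicting the polynomial-time premise. In short, the concentration step is both unnecessary (the in-expectation route kills the cross-term exactly) and not salvageable at polynomial $T$.

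Two smaller inaccuracies: the claim $\|\bar R_p - \BRd(p)\| = O(\eps)$ is wrong as stated --- the Lipschitz constant of $\BRd$ is $2/\delta^2$, so the correct bound is $O(\eps/\delta^2)=O(\eps^{1/3})$ under $\delta=\eps^{1/3}$; this happens not to hurt the final constant (it is dominated by $d\delta$) but the intermediate statement should be fixed. Also, your normalized averages $\bar p,\bar X_p,\bar R_p$ involve division by $n_p$, which can vanish; the paper avoids this by never dividing and keeping all sums unnormalized inside the norm. If you replace the concentration step with the paper's Jensen-plus-tower-property argument and repair the Lipschitz constant, the rest of your outline is essentially the paper's proof.
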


This directly implies Corollary~\ref{cor:mainintro} as follows:

\begin{proof}[Corollary \ref{cor:mainintro}]
Let $\A$ be  a weak calibration algorithm that attains a calibration
rate of $\eps^{c}$ at precision $\eps$. Then for some $T $
(where $T$ is polynomial in $\eps,d$) we have that
$C_T(X_{1:T},\A,\mathcal{W^\eps}) \leq F(d^2, \mathcal{W^\eps}, T) \leq
\eps^c$. Theorem \ref{thm:main} implies that  Algorithm \ref{alg}
returns a $O(\eps^{c} + d \eps^{1/3})$-NE after $T$ iterations with
probability greater than $\frac{1}{2}$.  This constitutes a randomized
polynomial time algorithm for $\eps$-NE, which by Theorem
\ref{thm:ppad} implies $PPAD \subseteq RP$.  
\end{proof}

\section{Analysis}
 
Our analysis is arranged into three parts. First, we define a smooth
best response function $\BRd$ along with some technical lemmas. Then we show how fixed points of this
$\BRd$ function are approximate Nash equilbria. With these lemmas, we
complete the proof.

\subsection{Smooth Best Response Functions} \label{sect:BR}

Our algorithm utilizes smooth best response functions.
For a mixed strategy $q \in \Delta_d$, define the best response functions as:
$$ \BR_i(q)  = \argmax_{p \in \Delta_d} \{ U_i(p,q) \}$$
In case the RHS is a set, define $\BRi$ as an arbitrary member of the set. 

We say that a function $g: \Delta_d \mapsto \Delta_d$ is an \emph{$\eps$-best response} with respect to $U_i$ if the following holds:
$$ \forall q \ , \  U_i(g(q),q) \geq U_i(\BRi(q),q) -\eps $$ 


It is be convenient to extend the best response function beyond the simplex. Define for any point in Euclidean space:
$$ \forall p \in \reals^n \ . \  \BR_i(p)  = \BR_i(\prod_{\Delta_d}(p))   $$
where $\prod_\K(p)$ denotes the projection operation onto a convex set $\K$ defined as: 
$$ \prod_\K(p) = \arg \min_{q \in \K} \| p - q\|_2$$
Using the generalized definition of $\BRi$, define the \emph{$\delta$-smooth best response function} as:
\begin{equation} \label{def:BR}
 \BRid (q) := \E_{\|q' - q \|_\infty   \leq \delta } [
 \BRi(q') ] 
\end{equation}
where the expectation is with respect to the random $q'$ sampled
uniformly on the set $\{q' | \ \|q' - q \|_\infty   \leq
\delta \}$.

\begin{Lem}\label{Lem:BR}
The function $\BRid$ is a $(2d\delta)$-best response with respect to $U_i$.
\end{Lem}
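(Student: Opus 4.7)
The plan is to exploit the bilinearity of $U_i$ to push the expectation outside, and then compare $U_i(\BRi(q'),q)$ to $U_i(\BRi(q),q)$ pointwise for each $q'$ in the $\ell_\infty$ ball around $q$. Since $U_i(\cdot,\cdot) = p^\top U_i q$ is linear in each argument, I would first observe
\[
U_i(\BRid(q),q) = U_i\!\bigl(\E_{q'}[\BRi(q')],q\bigr) = \E_{q'}\bigl[U_i(\BRi(q'),q)\bigr],
\]
so it suffices to show $U_i(\BRi(q'),q) \geq U_i(\BRi(q),q) - 2d\delta$ for every $q'$ with $\|q'-q\|_\infty \leq \delta$.

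For such a $q'$, let $\tilde q' = \prod_{\Delta_d}(q')$, so that by definition $\BRi(q') = \BRi(\tilde q')$ and $\tilde q' \in \Delta_d$. I would then chain three inequalities: Lipschitzness of $U_i$ in its second argument gives
\[
U_i(\BRi(\tilde q'),q) \geq U_i(\BRi(\tilde q'),\tilde q') - \|q - \tilde q'\|;
\]
the best-response property at $\tilde q'$ gives
\[
U_i(\BRi(\tilde q'),\tilde q') \geq U_i(\BRi(q),\tilde q');
\]
and Lipschitzness again (in the second argument) gives
\[
U_i(\BRi(q),\tilde q') \geq U_i(\BRi(q),q) - \|q - \tilde q'\|.
\]
Both Lipschitz steps use the bilinear analog of Equation~\eqref{eqn:UisLip}, noting that $\BRi(\cdot) \in \Delta_d$ has unit $\ell_1$ norm so only the $\ell_1$ distance in the second argument enters.

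The main obstacle — and the source of the factor $d$ — is bounding $\|q - \tilde q'\|$ in $\ell_1$ from an $\ell_\infty$ assumption on $q - q'$. I would argue: since $q \in \Delta_d$ and $\tilde q'$ is the Euclidean projection of $q'$ onto the (convex) simplex, the projection is nonexpansive in $\ell_2$, so
\[
\|q - \tilde q'\|_2 \leq \|q - q'\|_2 \leq \sqrt{d}\,\|q - q'\|_\infty \leq \sqrt{d}\,\delta,
\]
and then $\|q - \tilde q'\|_1 \leq \sqrt{d}\,\|q - \tilde q'\|_2 \leq d\,\delta$. Combining the chain yields $U_i(\BRi(q'),q) \geq U_i(\BRi(q),q) - 2d\delta$ uniformly in $q'$, and taking expectations over $q'$ on the $\ell_\infty$ ball completes the proof that $\BRid$ is a $(2d\delta)$-best response.
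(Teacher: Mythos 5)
Your proof is correct and follows the same basic strategy as the paper's: push the expectation through the bilinear $U_i$ (first argument), and for each $q'$ in the $\ell_\infty$-ball, chain two Lipschitz steps around the exact best-response inequality to get the $2d\delta$ slack. The one place you diverge is that you are actually more careful than the paper's proof. The paper writes down an argmin $q'$ restricted to $\Delta_d \cap \{\|\tilde q - q\|_\infty \le \delta\}$ and reads off $\|q'-q\|_1 \le d\delta$ directly from the $\ell_\infty$ constraint, glossing over the fact that the expectation in $\BRid$ ranges over the \emph{whole} cube, part of which lies outside the simplex, and that the projection $\prod_{\Delta_d}$ can move a point outside that $\ell_\infty$-ball. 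You handle this explicitly: you replace each $q'$ by $\tilde q' = \prod_{\Delta_d}(q')$, use that $\BRi(q') = \BRi(\tilde q')$ by the extended definition, and then bound $\|q-\tilde q'\|_1 \le \sqrt{d}\,\|q-\tilde q'\|_2 \le \sqrt{d}\,\|q-q'\|_2 \le d\delta$ via $\ell_2$-nonexpansiveness of projection onto the convex simplex. This fills a small gap in the paper's argument and arrives at the same constant. No issues.
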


\begin{proof}
Let $q,q'$ be such that $ \|q - q' \|_\infty
  \leq \delta$. Hence, $ \|q' - q \|
  \leq {d}\delta$ and since $U_i$ is $1$-Lipschitz with respect to the $\ell_1$ norm (see equation \eqref{eqn:UisLip}): 
  \[
 \forall p \ . \ | U_i(p,q') - U_i(p,q)| \leq  \| q' - q\| \leq {d} \delta
\]
Let $q' = \argmin_{\tilde q \in \Delta_d , \|\tilde q-q\|_\infty \leq \delta} U_i(\BRi(\tilde q),q)$.
Using the definitions above, we have 
\begin{align*}
U_i(\BRid(q),q) 
& = U_i\left(\E_{\|q' - q \|_\infty   \leq \delta } [ \BR_i(\tilde q) ],q\right)\\
&\geq U_i(\BRi(q'),q) \\
&\geq U_i(\BRi(q'),q') -d\delta  & \mbox{ since $\|q'-q\|_\infty\leq\delta$} \\
&\geq U_i(\BRi(q),q') -d\delta & \mbox{ definition of $\BRi$} \\
&\geq U_i(\BRi(q),q) -2d\delta & \mbox{ since $\|q'-q\|_\infty\leq\delta$}
\end{align*}
which completes the proof.
\end{proof}

\begin{Lem} \label{lem:BRisLip}
For $2 < d  < \frac{1}{ \delta} $, the function $\BRid$ is $\frac{2}{ \delta^2}$-Lipschitz.
\end{Lem}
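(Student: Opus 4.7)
The plan is to view $\BRid$ as the convolution of $\BRi$ with the uniform kernel on the $\ell_\infty$-ball of radius $\delta$ and bound the resulting difference by the volume of the symmetric difference of two translated boxes. Writing $B(q) := \{q' \in \reals^d : \|q' - q\|_\infty \leq \delta\}$, which has volume $(2\delta)^d$, the definition \eqref{def:BR} reads
\[
\BRid(q) \;=\; \frac{1}{(2\delta)^d}\int_{B(q)} \BRi(q')\,dq'.
\]
For any $q_1, q_2 \in \reals^d$, the integrals over $B(q_1) \cap B(q_2)$ cancel in $\BRid(q_1) - \BRid(q_2)$, leaving only integrals over $B(q_1) \setminus B(q_2)$ and $B(q_2) \setminus B(q_1)$ with opposite signs. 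Since $\BRi(q') \in \Delta_d$ implies $\|\BRi(q')\|_1 = 1$ pointwise, the triangle inequality for vector-valued integrals gives
\[
\|\BRid(q_1) - \BRid(q_2)\|_1 \;\leq\; \frac{\vol\bigl(B(q_1) \,\triangle\, B(q_2)\bigr)}{(2\delta)^d}.
\]

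Next I would compute $\vol(B(q_1) \triangle B(q_2))$ using the product formula for axis-aligned cubes: their intersection has side-length $\max(0, 2\delta - |q_1(i) - q_2(i)|)$ in coordinate $i$. Setting $x_i := \min(1, |q_1(i) - q_2(i)|/(2\delta)) \in [0,1]$, one has $\vol(B(q_1) \cap B(q_2)) = (2\delta)^d \prod_i (1 - x_i)$ in the overlapping case and $0$ otherwise. The crucial elementary step is the inequality $1 - \prod_i (1-x_i) \leq \sum_i x_i$ for $x_i \in [0,1]$ (an easy induction on $d$), which yields uniformly in both cases
\[
\vol(B(q_1) \,\triangle\, B(q_2)) \;\leq\; 2(2\delta)^{d-1}\|q_1 - q_2\|_1.
\]

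Combining the two displays gives $\|\BRid(q_1) - \BRid(q_2)\|_1 \leq \|q_1 - q_2\|_1/\delta$. The hypotheses $d > 2$ and $d < 1/\delta$ force $\delta < 1/d < 1/2$, so $1/\delta \leq 2/\delta^2$ and the claimed Lipschitz constant follows under the paper's convention that $\|\cdot\|$ denotes the $\ell_1$ norm. The only nontrivial step is the product-to-sum inequality used to bound the symmetric-difference volume, which I expect to be the main (and mild) obstacle; note that this approach actually delivers the sharper slope $1/\delta$, and the lemma's stated $2/\delta^2$ is simply a looser weakening convenient for later use.
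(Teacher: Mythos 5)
Your proof is correct and in fact establishes a sharper conclusion than the lemma states. Both arguments begin by reducing $\|\BRid(p)-\BRid(q)\|$ to $\vol(\C_\delta(p)\triangle\C_\delta(q))/\vol(\C_\delta)$, but they diverge in how this volume is controlled. The paper splits into cases: when $\|p-q\|_\infty>\delta^2$ it uses the trivial bound $\|\BRid(p)-\BRid(q)\|\le 2$; otherwise it bounds $\vol(\C_\delta(p)\setminus\C_\delta(q))$ by a union bound over which coordinate escapes the intersection, replacing each coordinate's contribution by $\|p-q\|_\infty$ to get $\tfrac{2d}{\delta}\|p-q\|_\infty$, and then invokes $d<1/\delta$. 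You instead compute the intersection volume exactly via the product formula for axis-aligned boxes and apply the elementary inequality $1-\prod_i(1-x_i)\le\sum_i x_i$. This has three advantages: it needs no case split (the product formula automatically covers disjoint boxes, where some $x_i=1$); it produces a bound directly in $\|p-q\|_1$ rather than $d\|p-q\|_\infty$; and it delivers the sharper slope $1/\delta$, of which $2/\delta^2$ is a trivial weakening once $\delta\le 2$ (and here $\delta<1/d<1/2$). In particular you never need the hypothesis $d<1/\delta$. One cosmetic point: the paper writes $\vol(\C_\delta)=\delta^d$, which omits a factor $2^d$ for an $\ell_\infty$-ball of radius $\delta$; your $(2\delta)^d$ is the correct normalization, though only volume ratios appear, so the paper's slip is harmless.
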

\begin{proof}
Consider any two distributions $p,q$. We consider two cases:
\paragraph{case 1: $\|p - q\|_\infty >  \delta^2$}. In this case we have
\begin{align*}
\| \BRid(p) - \BRid(q)\| & \le  \| \BRid(p)\| + \| \BRid(q)\| & \mbox {triangle inequality }\\
& \leq  2 \ & \mbox {the range of $\BRid$ is $\Delta_d$ } \\
& \leq \| p - q \|_{\infty} \cdot \frac{2}{\delta^2}  & \mbox{by condition on $\|p - q \|_\infty$ } \\
& \leq \| p - q \| \cdot \frac{2}{\delta^2} 
\end{align*}

\paragraph{case 2: $\|p - q\|_\infty \leq \delta^2$}. 
Denote the $d$-dimensional cube with radius $\delta$ centered at $p$ by 
$$ \C^d_\delta(p) = \C_\delta(p) =  \{ q \in \Delta_d \ , \ \|q-p\|_\infty \leq \delta\}$$
We have
\begin{align*}
\| \BRid(p) - \BRid(q)\| & =  \| \E_{    \|p' - p \|_\infty  \leq \delta} [ \BRi(p') ] - \E_{  \|q' - q \|_\infty  \leq \delta} [ \BRi(q') ] \| \\
& =   \| \E_{  p'  \in \C_\delta(p) } [ \BRi(p') ] - \E_{ q' \in \C_\delta(q) } [ \BRi(q') ] \| \\
& \leq \frac{\vol(\C_\delta(p) \setminus \C_\delta(q) \ \cup \ \C_\delta(q) \setminus \C_\delta(p)    ) }{\vol ( \C_\delta(p) \cup \C_\delta(q)) } \\
& \leq  2 \frac{ \vol \{  \C_\delta(p) \setminus \C_\delta(q) ) }{\vol (  \C_\delta(q) ) } 
\end{align*}
The volume of $\C_\delta(x)$ for any $ x \in \reals^d$ is given by $\delta^d$. To bound the volume of $\C_\delta(p) \setminus \C_\delta(q)$ notice that at least one coordinate of any point in this set is within distance $\delta$ of $p$ but not of $q$. Hence, the range of possible values for this coordinate is bounded by $\|p-q\|_\infty $. This is possible for all $d$ coordinates, and we obtain:
$$ \vol \{  \C_\delta(p) \setminus \C_\delta(q) ) \leq {\|p - q\|_\infty}  \cdot d \cdot \vol(\C_\delta^{d-1} (p)) \leq d \|p - q\|_\infty \delta^{d-1} $$ 
We conclude that:
\begin{align*}
\| \BRid(p) - \BRid(q)\| &  \leq 2 \frac{\vol \{  \C_\delta(p) \setminus \C_\delta(q) ) }{\vol (  \C_\delta(q) ) } \\
& \leq   \frac{ 2 \|p - q\|_\infty   d \delta^{d-1} } {\delta^d} \leq \frac{2d}{\delta} \cdot  \|p - q\|_\infty  \leq \frac{2}{\delta^2} \|p-q\|_\infty
\end{align*}
which completes the proof.
\end{proof}

\subsection{Approximate Nash equilibria and fixed points}

\begin{Lem} \label{lem:NEisfixedpoint} (Approximate NE are Approximate
  Fixed Points) Let $p$ be a (possibly joint) distribution on the
  space of outcomes $\{0,1\}^d \times \{0,1\}^d$; let $[p]_1$ and
  $[p]_2$ denote the marginal distributions of $p$ with respect to the
  first and second coordinates (respectively); let $\BRd(p)$ denote
  the product distribution $\BR_{1,\delta}([p]_2) \times
  \BR_{2,\delta}([p]_1)$. Suppose
\[
\| p - \BRd(p) \| \leq \gamma
\]
Then $\BRd(p)$ is a $(2\gamma +2d\delta)$-NE.
\end{Lem}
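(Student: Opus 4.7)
The plan is to let $q = \BRd(p)$ with marginals $x^\star = [q]_1 = \BR_{1,\delta}([p]_2)$ and $y^\star = [q]_2 = \BR_{2,\delta}([p]_1)$, and also write $x = [p]_1$, $y = [p]_2$. Since $q$ is by construction a product distribution, verifying that $(x^\star,y^\star)$ is an approximate NE amounts to checking the two unilateral deviation inequalities for $U_1$ and $U_2$. The whole argument is a two-step transfer: first use the smooth-best-response guarantee from Lemma~\ref{Lem:BR} to obtain a best-response inequality against the ``wrong'' opponent ($y$ instead of $y^\star$), then patch up the opponent using the Lipschitz property \eqref{eqn:UisLip} of $U_i$.

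First I would observe that marginalization is a contraction in $\ell_1$: for any two joint distributions $p,q$ on $\{0,1\}^d\times\{0,1\}^d$, $\|[p]_i-[q]_i\|\leq \|p-q\|$ (pointwise sums of entries shrink absolute differences). Applied to $p$ and $\BRd(p)$, this yields $\|x-x^\star\|\leq \gamma$ and $\|y-y^\star\|\leq \gamma$. Next, by Lemma~\ref{Lem:BR}, $\BR_{1,\delta}$ is a $(2d\delta)$-best response with respect to $U_1$, so for every pure (hence every mixed) strategy $\tilde x$,
\[
U_1(x^\star,y) \;=\; U_1(\BR_{1,\delta}(y),y) \;\geq\; U_1(\BR_1(y),y) - 2d\delta \;\geq\; U_1(\tilde x, y) - 2d\delta.
\]

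Then I would swap $y$ for $y^\star$ on both sides using the $1$-Lipschitz bound \eqref{eqn:UisLip} for $U_1$ in its second argument, which costs $\|y-y^\star\|\leq\gamma$ on each side:
\[
U_1(x^\star,y^\star) \;\geq\; U_1(x^\star,y)-\gamma \;\geq\; U_1(\tilde x,y)-2d\delta-\gamma \;\geq\; U_1(\tilde x,y^\star)-2\gamma-2d\delta.
\]
The symmetric argument for player $2$ (using Lemma~\ref{Lem:BR} for $\BR_{2,\delta}$ and Lipschitzness of $U_2$ in its first argument, together with $\|x-x^\star\|\leq\gamma$) gives $U_2(x^\star,y^\star)\geq U_2(x^\star,\tilde y)-2\gamma-2d\delta$ for all $\tilde y$. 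Together these are exactly the two inequalities defining a $(2\gamma+2d\delta)$-NE for the pair $(x^\star,y^\star)=\BRd(p)$.

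There is really no hard step here; the only thing to be careful about is conceptual, namely that the approximate equilibrium is claimed for $\BRd(p)$ rather than $p$ itself, so one cannot directly plug $p$'s marginals into the best-response inequality. The Lipschitz transfer is what bridges that gap, and the contraction-under-marginalization observation is what allows us to pay only $\gamma$ (not something larger depending on $d$) for each swap.
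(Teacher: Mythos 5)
Your proof is correct and follows essentially the same route as the paper: both use the fact that marginalization is an $\ell_1$-contraction to get $\|y-y^\star\|\leq\gamma$, then apply Lemma~\ref{Lem:BR} against the marginal $y=[p]_2$, and finally patch the opponent's strategy from $y$ to $y^\star$ using the $1$-Lipschitz bound~\eqref{eqn:UisLip}, losing $\gamma$ at each of the two swaps. The chain of three inequalities you wrote is verbatim the one in the paper (with $\tilde x$ in place of the paper's $q$).
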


\begin{proof}
By construction, $\BRd(p)$ is a product distribution. Hence, it
suffices to show that $\BR_{1,\delta}([p]_2)$ is an
$(2\gamma+2d\delta)$-best response to $\BR_{2,\delta}([p]_1)$ (and vice
versa). First, observe that:
\begin{equation}\label{eq:marginal_bound}
\|[q]_1 - [p]_1\| = \sum_{i=1}^d \|\sum_{j=1}^d( q(i,j) -
p(i,j))\| \leq \sum_{i,j=1}^d \| q(i,j) -
p(i,j)\|= \|q-p\|
\end{equation}
Similarly, $\|[q]_2 - [p]_2\| \leq \|q-p\|$
Hence,
\[
\|[p]_i -  \BRid(p) \| \leq \| p - \BRd(p) \| \leq \gamma
\]
By Lemma~\ref{Lem:BR}, $\BR_{1,\delta}([p]_2)$ is a
$2d\delta$-best response to $[p]_2$. Since $\|[p]_2 - \BR_{2,\delta}([p]_1)\|
\leq \gamma$, we have that for all $q\in\Delta_d$,
\[
| U_1(q, [p]_2) - U_1(q,\BR_{2,\delta}([p]_1)) | \leq \gamma
\]
Hence, for all $q\in\Delta_d$,
\begin{align*}
U_1(\BR_{1,\delta}([p]_2),\BR_{2,\delta} ([p]_1))
&\geq U_1(\BR_{1,\delta}([p]_2),[p]_2) -\gamma\\
&\geq U_1(q,[p]_2) -\gamma-2d\delta\\
&\geq U_1(q,\BR_{2,\delta} ([p]_1)) -2\gamma-2d\delta
\end{align*}
which proves the claim.
\end{proof}

\section{Proof (of Theorem~\ref{thm:main}))}

Three observations are helpful for intuition in the proof:
\begin{itemize}
\item By construction in Algorithm~\ref{alg}, in expectation,
the outcomes $X_t$  are just $\BRd(p_t) $.  Precisely, 
$ E[X_t| X_1, \ldots X_{t-1} ] = \BRd(p_t) $.
\item Suppose $\omega_p(p_t)$ is nonzero (so $\|p-p_t\|\leq \eps$
  ). Then, by Lemma~\ref{lem:BRisLip}, the larger $\delta$ is the
  closer $\BRd(p_t)$ and $\BRd(p)$ will be to each other. 
\item The smaller $\delta$ is, the more accurate an approximate NE we
  have for an approximate fixed point of $\BRd$ (by Lemma~\ref{lem:NEisfixedpoint}).
\end{itemize}

The proof of Theorem~\ref{thm:main} is a consequence from the following lemma.

\begin{Lem} \label{lem:bound}
Let $p$ and $X_{1:T}$ be the random variables defined in Algorithm
\ref{alg}. For $2 < d  < \frac{1}{ \delta} $, we have that:
\[
\E \left\| p - \BRd(p)\right\| 
\leq \E[ C_T(X_{1:T},\A,\mathcal{W^\eps}) ] +\eps +\frac{4\eps}{\delta^2 }
\]
\end{Lem}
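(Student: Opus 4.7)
Let $\hat p$ denote the random vertex sampled at the end of Algorithm~\ref{alg} (I use $\hat p$ to distinguish the random sample from summation indices $p \in V$). Since $\hat p$ is drawn by first picking $t$ uniformly in $\{1,\ldots,T\}$ and then setting $\hat p = p$ with probability $\omega_p(p_t)$, and since $\|p - \BRd(p)\|$ depends only on the (fixed) vertex $p$,
\[
\E\|\hat p - \BRd(\hat p)\| \;=\; \frac{1}{T}\sum_{p \in V}\E[w_p]\,\|p - \BRd(p)\| \;=\; \frac{1}{T}\sum_{p \in V}\bigl\|\E[w_p(p - \BRd(p))]\bigr\|,
\]
where $w_p := \sum_{t=1}^T \omega_p(p_t)$.

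Next I insert the telescoping identity
\[
p - \BRd(p) \;=\; (p - p_t) + (p_t - X_t) + (X_t - \BRd(p_t)) + (\BRd(p_t) - \BRd(p))
\]
inside $w_p(p - \BRd(p)) = \sum_t \omega_p(p_t)(p - \BRd(p))$ and take expectations term by term. The key observation is that the noise piece vanishes exactly: $\omega_p(p_t)$ is determined by $X_1,\ldots,X_{t-1}$, and by construction of Algorithm~\ref{alg} one has $\E[X_t \mid X_1,\ldots,X_{t-1}] = \BRd(p_t)$, so the tower property gives $\E[\omega_p(p_t)(X_t - \BRd(p_t))] = 0$ for every $t$. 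This is the conceptual crux: by computing the expectation \emph{before} taking norms, linearity exploits the martingale cancellation to kill this term outright, avoiding any appeal to martingale concentration (which would introduce spurious $\sqrt T$ factors).

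Only three contributions survive. I bound each by Jensen ($\|\E X\| \le \E\|X\|$) followed by the triangle inequality, and then sum over $p$. For the $(p-p_t)$ term, the condition $\omega_p(p_t)>0$ forces $p \in V(p_t)$ and hence $\|p-p_t\| \le \eps$ by the precision of the triangulation, contributing $\eps$ after summation. For the $(\BRd(p_t)-\BRd(p))$ term, I first upgrade Lemma~\ref{lem:BRisLip}: combining it on each marginal with~\eqref{eq:marginal_bound} and the standard inequality $\|a\otimes b - a'\otimes b'\|_1 \le \|a-a'\|_1 + \|b-b'\|_1$ for product distributions shows that the joint map $\BRd$ is $\tfrac{4}{\delta^2}$-Lipschitz, contributing $\tfrac{4\eps}{\delta^2}$ after summation; this is precisely why the stated constant is $4$ rather than $2$. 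The $(p_t-X_t)$ term, once summed over $p$, matches the definition of $T\cdot\E[C_T]$ exactly. Using $\sum_p \E[w_p] = T$ (from~\eqref{weak_cover}) to finish the normalization yields the claimed bound. The main obstacle is identifying the ``expectation-before-norm'' reordering that eliminates the martingale term without concentration; the secondary technical point is the factor-of-two upgrade to the Lipschitz constant for the product-distribution map $\BRd$.
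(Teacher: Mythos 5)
Your proof is correct and follows essentially the same approach as the paper: both rely on expressing $\E\|\hat p - \BRd(\hat p)\|$ as $\frac{1}{T}\sum_p\|\E[\sum_t\omega_p(p_t)(p-\BRd(p))]\|$, exploiting $\E[X_t\mid X_{1:t-1}]=\BRd(p_t)$ together with the tower property to kill the martingale term, applying Jensen to recover $\E[C_T]$, and bounding the residuals $\|p-p_t\|\leq\eps$ and $\|\BRd(p_t)-\BRd(p)\|\leq 4\eps/\delta^2$ via the triangulation diameter and the Lipschitz bound on each marginal of $\BRd$. The only (cosmetic) difference is that the paper runs the chain downward from $\E[C_T]$ while you run it upward from $\E\|\hat p-\BRd(\hat p)\|$, and you make the four-term telescope explicit where the paper folds the tower-property substitution into the initial Jensen step; the algebra and bounds are identical.
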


The proof of our Main result now follows:

\begin{proof} [Theorem~\ref{thm:main}]
By Markov's inequality, we have that with probability greater than
$1/2$
\begin{align*}
\left\| p - \BRd(p)\right\| 
& \leq 2 \E[ C_T(X_{1:T},\A,\mathcal{W^\eps}) ] +2\eps +\frac{8\eps}{\delta^2 }\\
& \leq 2 F(d^2, \mathcal{W^\eps}, T) +10\eps^{1/3}
\end{align*}
using the  definition of $F$ (on a $d^2$ sized outcome space) and $\delta=\eps^{1/3}$.
By applying Lemma
\ref{lem:NEisfixedpoint}, we have a $(4 F(d^2, \mathcal{W^\eps}, T) +20\eps^{1/3}+2d\eps^{1/3})$-NE, which
completes the proof.
\end{proof}

We continue to prove Lemma~\ref{lem:bound}:
\begin{proof}[Lemma~\ref{lem:bound}]
We proceed by lower bounding the expected calibration rate as follows:
\begin{align*}
& \  \E[ C_T(X_{1:T},\A,\mathcal{W^\eps}) ] \\
= & \ \E  \left[   \sum_{p \in V }  \left\|   \frac{1}{T} \sum_{t=1}^T \omega_p(p_t)  (p_t - X_t)  \right\| \right] \\
\geq & \ \frac{1}{T} \sum_{p \in V }  \left\| \E \left[
      \sum_{t=1}^T \omega_p(p_t)  (p_t - X_t)
     \right]\right\| & \mbox{ Jensen's} \\
= & \  \frac{1}{T} \sum_{p \in V }  \left\| 
      \sum_{t=1}^T \E \left[ \omega_p(p_t)  (p_t - X_t)
     \right]\right\| & \mbox{linearity} \\
= & \  \frac{1}{T} \sum_{p \in V }  \left\| 
      \sum_{t=1}^T \E \left[ \ \ \E[ \omega_p(p_t)  (p_t - X_t)|
        X_1, \ldots X_{t-1} ] \ \ 
     \right]\right\| \\
= & \  \frac{1}{T}\sum_{p \in V }  \left\| 
      \sum_{t=1}^T \E \left[ \omega_p(p_t)  (p_t - \BRd(p_t))
     \right]\right\| & \mbox{$p_t$ is determined by the history} \\
\end{align*}
Note that by construction in Algorithm~\ref{alg}
$
E[X_t| X_1, \ldots X_{t-1} ] = \BRd(p_t)
$, which we have used in the last step.

Hence, we have:
\begin{align*}
 \  \E[ C_T(X_{1:T},\A,\mathcal{W^\eps}) ] 
\geq & \  \frac{1}{T}\sum_{p \in V }  
\left\|   \sum_{t=1}^T \E \left[ \omega_p(p_t)  (p - \BRd(p))
     \right]\right\| 
\\ 
& \ -  \frac{1}{T}\sum_{p \in V }  \left\|   \sum_{t=1}^T\E \left[ \omega_p(p_t)  
(p-p_t + \BRd(p_t) - \BRd(p))
     \right]\right\| \\
\end{align*}
by the triangle inequality.

For the first term,
\begin{align*}
& \frac{1}{T}\sum_{p \in V }  
\left\|   \sum_{t=1}^T \E \left[ \omega_p(p_t)  (p - \BRd(p))
     \right]\right\| \\
= &  \ \frac{1}{T}\sum_{p \in V }  
\left\|  \left( \sum_{t=1}^T \E \left[ \omega_p(p_t) 
     \right] \right) (p - \BRd(p))\right\| \\
= &  \ \frac{1}{T}\sum_{p \in V }  
\sum_{t=1}^T \E \left[ \omega_p(p_t) 
     \right]
\left\| p - \BRd(p)\right\| \\
=& \ \frac{1}{T} \sum_{t=1}^T \E \left[ \sum_{p \in V }  
\omega_p(p_t)  
\left\| p - \BRd(p)\right\| \right]\\
:=& \ \E_{p \sim D} \left\| p - \BRd(p)\right\| \\
\end{align*}
where $p\sim D$ is sampled as follows: first, sample $t$ uniformly
from $[T]$, then sample $p_t$ according to the underlying process, and
then sample $p\in V(p_t)$ with probability $\omega_p(p_t)$.  Note that
$D$ is precisely the sampling procedure defined in Algorithm \ref{alg}.

For the last term, we have that:
\begin{align*}
& \frac{1}{T}\sum_{p \in V }  \left\|   \sum_{t=1}^T\E \left[ \omega_p(p_t)  (p-p_t + \BRd(p_t) - \BRd(p))
     \right]\right\| \\ 
\leq & \ \frac{1}{T}\sum_{p \in V } \sum_{t=1}^T  \left\|   \E \left[ \omega_p(p_t)  (p-p_t + \BRd(p_t) - \BRd(p))
     \right]\right\| & \mbox{triangle inequality} \\
\leq &  \ \frac{1}{T}\sum_{p \in V } \sum_{t=1}^T
\E \left[ \left\|   \omega_p(p_t)  (p-p_t + \BRd(p_t) - \BRd(p))
     \right\|\right] & \mbox{Jensen's} \\
\leq &  \ \frac{1}{T}\sum_{p \in V } \sum_{t=1}^T
\E \left[  \omega_p(p_t)
\left\|    p-p_t   \right\| +
\omega_p(p_t) \left\|   \BRd(p_t) - \BRd(p)   \right\|
\right] & \mbox{sublinearity} \\
\end{align*}

Now observe that for product distributions $D=p(x)q(y)$ and
$D'=p'(x)q'(y)$.
\begin{align*}
\|D-D'\| &= \sum_{x,y} |p(x)q(y) - p'(x)q'(y)| \\
&\leq \sum_{x,y} |p(x)q(y) - p(x)q'(y)| +\sum_{x,y} |p(x)q'(y) -
p'(x)q'(y)| \\
&= \sum_{x,y} p(x)|q(y) - q'(y)| +\sum_{x,y} q'(y)|p(x) - p'(x)|
\\
&= \|q - q'\| +\|p - p'\| \\
\end{align*}
Also note that $V(q)$ has diameter $\eps$, then if $w_p(q)\neq 0$ then $\left\|
  p-q   \right\| \leq \eps$. Hence,
\begin{align*}
& \left\| \BRd(p_t) - \BRd(p)   \right\| \\
\leq & \ \left\|
  \BR_{1,\delta}([p_t]_2) - \BR_{1,\delta} ([p]_2)   \right\|
+ \left\| \BR_{2,\delta}([p_t]_1) - \BR_{2,\delta} ([p]_1)   \right\|
\\
\leq & \ 
\frac{2\left\|  [p_t]_2 - [p]_2  \right\|}{\delta^2}
+ \frac{2\left\|  [p_t]_1 - [p]_1  \right\|}{\delta^2}&\mbox{by Lemma
  \ref{lem:BRisLip}} \\
\leq & \  \frac{4 \left\|  p_t - p  \right\|}{\delta^2}&\mbox{by Equation
  \ref{eq:marginal_bound}} \\
\leq & \  \frac{ 4\eps}{\delta^2}\\
\end{align*}
where we have used Lemma~\ref{lem:BRisLip} with our condition on $d$.

Hence, for the last term,
\begin{align*}
& \frac{1}{T}\sum_{p \in V }  \left\|   \sum_{t=1}^T\E \left[ \omega_p(p_t)  (p-p_t + \BRd(p_t) - \BRd(p))
     \right]\right\| \\
\leq &  \ \frac{1}{T}\sum_{p \in V } \sum_{t=1}^T
\E \left[\omega_p(p_t) \right] \left( \eps +\frac{ 4\eps}{\delta^2} \right)\\
= &  \ \frac{1}{T} \sum_{t=1}^T
\E \left[ \sum_{p \in V }\omega_p(p_t)
\right] \left(\eps +\frac{4\eps}{\delta^2} \right)\\
= & \eps +\frac{ 4\eps}{\delta^2} \\
\end{align*}
The claim now follows.
\end{proof}

\section{Discussion and Open Problems}

This work provides a computational lower bound for weak calibration,
suggesting that the hardness of the problem may be fundamentally
related to the problem of finding a fixed point. The following questions remain open:
\begin{itemize}
\item
Is it possible to obtain an efficient algorithm for strong
calibration? (One which gives a low calibration error in time
polynomial in the relevant parameters.)
\item What is the statistical complexity of (weak or strong)
  calibration? Here, the statistical complexity is the number of
  rounds required to calibrate at some desired level of accuracy,
  without computational considerations.
\end{itemize}

\bibliography{calibration} 
\bibliographystyle{alpha}

\end{document}